\newtheorem{theorem}{Theorem}
\newtheorem{lemma}{Lemma}
\newtheorem{definition}{Definition}
\title{Active influence in dynamical models of \\structural balance in social networks}
\shorttitle{Structural balance in social networks} 
\author{Tyler H. Summers\inst{1} \and Iman Shames\inst{2}}
\institute{                    
  \inst{1} Automatic Control Laboratory, ETH Zurich\\
  \inst{2} Department of Electrical and Electronic Engineering, University of Melbourne
}
\pacs{89.65.-s}{Social systems}
\pacs{05.45.-a}{Nonlinear dynamics}
\pacs{02.50.Le}{Decision theory}
\abstract{
We consider a nonlinear dynamical system on a signed graph, which can be interpreted as a mathematical model of social networks in which the links can have both positive and negative connotations. In accordance with a concept from social psychology called \emph{structural balance}, the negative links play a key role in both the structure and dynamics of the network. Recent research has shown that in a nonlinear dynamical system modeling the time evolution of ``friendliness levels'' in the network, two opposing factions emerge from almost any initial condition. Here we study active external influence in this dynamical model and show that any agent in the network can achieve any desired structurally balanced state from any initial condition by perturbing its own local friendliness levels. Based on this result, we also introduce a new network centrality measure for signed networks. The results are illustrated in an international relations network using United Nations voting record data from 1946 to 2008 to estimate friendliness levels amongst various countries.}
\begin{document}

\maketitle

\section{Introduction}
There has been a broad effort in many fields of science and engineering to understand (and eventually optimize and control) large networks, consisting of many interacting subsystems. One of the key goals of this effort is to describe and quantify how graph properties of the interconnection structure interrelate with properties of dynamical processes on the network. However, there has been relatively little work on structure and dynamics on signed graphs, in which interconnections in the networks can have either a positive or a negative association. Signed graphs arise in models for a variety of systems, including social networks, data classification and clustering, genetic regulatory networks, and non-ferromagnetic Ising models. In this paper, we study a nonlinear dynamical system on a signed graph, which arises from a mathematical model of social networks in which the links can have both positive and negative connotations.

In mathematical models for social network analysis, links in the network often have positive connotations, such as friendship, collaboration, information sharing, etc. However, negative interactions in social networks, such as antagonism, distrust, or disagreement, also play a key role in both structure and dynamics of social networks and are receiving increased attention in the literature \cite{easley2010networks,leskovec2010signed,leskovec2010predicting}. They are particularly interesting in light of new online networks that provide real-world dynamical data for social networks with both positive and negative relationships. For example, users on the product review website Epinions can display both trust and distrust of other users; on the technology news website Slashdot, users can designate other users as either friend or foe; and on Wikipedia, users can vote for or against another person becoming an administrator. The structure of such networks has been studied in \cite{leskovec2010predicting,leskovec2010signed}. 

The concept of \emph{structural balance} is an old idea in sociology, tracing back to social psychology research in the 1940s by Heider \cite{heider1946attitudes}. The theory begins with notions of tension and balance in three-agent networks. Imagine that a person has two good friends who hate each other. There is a tension in this situation that is resolved when either the person takes one side and ends the friendship with the other or when the feuding friends reconcile their differences. Similarly, there is a tension amongst three people unfriendly with one another that is resolved when two of them to form an alliance against the other. The theory was generalized to networks of $n$ agents in the 1950s by Cartwright and Harary \cite{cartwright1956structural}, who model the network as a complete signed graph in which the $n$ vertices represent agents and a complete signed edge set represents relationships amongst all agents in the network, with positive and negative signs associated with friendly and hostile relationships, respectively. They showed that $n$-agent structurally balanced networks are precisely those that can be partitioned into two factions, such that within each faction all relationships are friendly and between factions all relationships are hostile. The theory has found various applications, e.g., in models of international relations \cite{axelrod1991landscape}, but has remained static, focusing mostly on network structure.

Dynamic models for structural balance are quite recent and provide a new and interesting perspective \cite{antal2005dynamics,kulakowski2005heider,marvel2011continuous,krawczyk2010application}. If a network is in an initial state that is not structurally balanced, how might the state of the network evolve toward a structurally balanced state, and what will the eventual balanced state be? Discrete dynamical models, in which a relationship is either positive or negative, have been proposed in \cite{antal2005dynamics} and \cite{marvel2009energy}. In these models, the system evolves by flipping the sign on certain edges to increase the number structurally balanced triangles in the network. However, these models suffer from the existence of so-called ``jammed states'', in which the system becomes stuck in a structurally unbalanced local minimum. More recently, Ku{\l}akowski et al. \cite{kulakowski2005heider} and Marvel et al. \cite{marvel2011continuous} have proposed and analyzed a continuous dynamical model, in which a real-valued ``friendliness level'' is associated with each relationship (positive values indicate friendliness and negative values indicate hostility). In this model, Marvel et al. \cite{marvel2011continuous} show that for generic initial conditions, the system converges to a structurally balanced state in finite time. Further, a closed-form expression for the balanced state is derived in terms of the initial state.

In this paper, we study active external influence in a dynamic model of structural balance. In particular, we suppose that a single agent can influence the state of the network by perturbing its own friendliness levels; in an international relations context, this might be the result of certain foreign policy actions. We show that it is possible for any single agent to achieve any desired structurally balanced state given any initial state. We also present a method to compute and optimize the influence that is required to achieve the desired state. The magnitude of the required input defines a new dynamic network centrality measure for signed graphs in the context of structural balance, assigning a relative ``influence'' value to each agent in the network in terms of its ability to achieve a desired structurally balanced state with a low-magnitude perturbation. An agent who can achieve a desired state with a perturbation of smaller magnitude is more influential than agents who require larger perturbations. We illustrate the results to structural balance in an international relations network, using data from United Nations General Assembly voting records dating from 1946 to 2008 to estimate friendliness levels amongst various countries. Our results provide an interesting lens through which one can view the data.

The paper is organized as follows. Section 2 reviews the model and basic results from \cite{marvel2011continuous} and \cite{kulakowski2005heider}. Section 3 presents our main results. In Section 4, we illustrate our results to an international relations case study. Finally, concluding remarks and a future research outlook are provided in Section 5.

\section{A Continuous Dynamical Model for Structural Balance}
In this section, we review the continuous dynamic model for structural balance proposed in \cite{kulakowski2005heider} and \cite{marvel2011continuous} and a basic result from \cite{marvel2011continuous}. Let $x_{ij}$ denote the ``friendliness level'' between agents $i$ and $j$. We allow $x_{ii}$ to be non-zero, which can be interpreted as a ``self-confidence'' or ``willingness to compromise'' of node $i$ (large positive value means high self-confidence or low willingness to compromise, large negative value means low self-confidence or high willingness to compromise). We assume that relationships are symmetric, i.e., $x_{ij} = x_{ji}$, and collect the friendliness levels in the network into a symmetric matrix $X\in \mathbf{R}^{n\times n}$. For any given $X\in \mathbf{R}^{n\times n}$, we associate a complete\footnote{In this paper we assume that each agent has (or develops) an opinion about every other agent, which means that $X$ is assumed to have no non-zero entries. The theory of structural balance can be extended to non-complete graphs; see e.g. \cite{easley2010networks}. We will consider this in future work.} signed graph in which the edge signs correspond to the signs of each element $x_{ij}$ of $X$. A signed graph is called \emph{structurally balanced} if all triangles are balanced. We say that the network is in a structurally balanced state if the corresponding complete signed graph is structurally balanced. 

The dynamic model proposed in \cite{kulakowski2005heider} and analyzed in \cite{marvel2011continuous} is given by the matrix differential equation
\begin{equation} \label{dyn}
\dot{X}(t) = X^2(t), \quad X(0) = X_0
\end{equation}
or equivalently elementwise by
\begin{equation} \label{dynelement}
\dot{x}_{ij}(t) = \sum_k x_{ik}(t) x_{kj}(t).
\end{equation}
Each term in (\ref{dynelement}) moves the associated triangle $ijk$ toward a balanced configuration, and for a given pair $ij$ the summation aggregates the effects across all relationships of $i$ and $j$ in the network. This can be interpreted as a gossip process, in which the friendliness level in a relationship is changed based on opinions of mutual friends. 

One of the main results from \cite{marvel2011continuous}, which is relevant for our study, is the following:
\begin{theorem}[\cite{marvel2011continuous}] \label{dynbalance}
Suppose $X_0$ is a random initial matrix, with entries sampled independently from an absolutely continuous distribution with bounded support. Let $\lambda_1 \geq \cdots \geq \lambda_n$ and $w_1,... ,w_n$ be respectively the eigenvalues and associated eigenvectors of $X_0$. Then with probability converging to 1 in the number of agents $n$ we have
\begin{itemize} \itemsep .01cm
\item $\lambda_1 > 0$,
\item $\lambda_1 \neq \lambda_2$,
\item all components $w_1$ are nonzero.
\end{itemize}
As a consequence, the network converges with finite escape time $t^* = 1/\lambda_1$ to a structurally balanced state with the same sign pattern as the rank one matrix $w_1 w_1^T$; thus, the two factions are determined by the sign pattern of $w_1$: $\mathcal{E} = \{k: w_{1k} > 0\}$ and $\mathcal{R} = \{ k: w_{1k} < 0 \}$. 
\end{theorem}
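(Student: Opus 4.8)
The statement naturally splits into two logically independent pieces: the three spectral facts about the random initial matrix $X_0$, which are claimed to hold with high probability, and the deterministic dynamical consequence, which follows from them by directly integrating \eqref{dyn}. I would treat the dynamical part first, since it is the heart of the assertion and is essentially a calculation. The key observation to exploit is that the flow $\dot X = X^2$ freezes eigenvectors and only moves eigenvalues. Writing the spectral decomposition $X_0 = W\Lambda W^T$ with $W=[w_1,\dots,w_n]$ orthogonal and $\Lambda=\mathrm{diag}(\lambda_1,\dots,\lambda_n)$, I would substitute the ansatz $X(t)=W\Lambda(t)W^T$ into \eqref{dyn}; because $W$ is constant, the matrix equation collapses to the decoupled scalar Riccati equations $\dot\lambda_i = \lambda_i^2$, with solution $\lambda_i(t) = \lambda_i/(1-\lambda_i t)$. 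Since the vector field $X\mapsto X^2$ is polynomial, hence locally Lipschitz, uniqueness of solutions guarantees that this is \emph{the} solution, so the eigenvectors indeed remain fixed for all $t$ before blow-up.

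From here the escape time and the limiting sign pattern can be read off. Each eigenvalue with $\lambda_i>0$ diverges at time $1/\lambda_i$; because $\lambda_1>0$ is the largest and $\lambda_1\neq\lambda_2$ makes it the strict maximum, the earliest such time is $t^\ast=1/\lambda_1$, which is therefore the finite escape time. To identify the balanced state I would renormalize and examine
\begin{equation}
\frac{X(t)}{\lambda_1(t)} = \sum_{i=1}^n \frac{\lambda_i(t)}{\lambda_1(t)}\, w_i w_i^T ,
\end{equation}
and check that $\lambda_i(t)/\lambda_1(t)=\frac{\lambda_i(1-\lambda_1 t)}{\lambda_1(1-\lambda_i t)}\to 0$ for every $i\ge 2$ as $t\to t^{\ast-}$, since $\lambda_i\neq\lambda_1$ keeps the denominator bounded away from zero. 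Thus $X(t)/\lambda_1(t)\to w_1 w_1^T$ entrywise. As $(w_1w_1^T)_{ij}=w_{1i}w_{1j}$ is nonzero for all $i,j$ precisely because all components of $w_1$ are nonzero, for $t$ close enough to $t^\ast$ the sign of $X_{ij}(t)$ agrees with the sign of $w_{1i}w_{1j}$. That product is positive exactly when $i,j$ lie in the same set among $\mathcal{E}=\{k:w_{1k}>0\}$ and $\mathcal{R}=\{k:w_{1k}<0\}$ and negative otherwise, so the network is friendly within factions and hostile across them, which is structural balance in the Cartwright--Harary sense.

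It remains to justify the three spectral facts, which is where the probabilistic content and the main difficulty lie. The conditions $\lambda_1\neq\lambda_2$ and ``all components of $w_1$ nonzero'' are genericity statements: the symmetric matrices having a repeated top eigenvalue, and those whose leading eigenvector has a vanishing coordinate, form lower-dimensional algebraic subsets of zero Lebesgue measure, so an absolutely continuous law assigns them probability zero for every finite $n$. The genuinely asymptotic claim is $\lambda_1>0$, for which I would invoke random-matrix results: under the stated hypotheses $X_0$ is a Wigner-type matrix with bounded (hence all-moment) entries, its empirical spectral distribution concentrates, and its largest eigenvalue converges to the positive right edge of the limiting support, so $\mathbf{P}(\lambda_1>0)\to 1$. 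The hard part is precisely this last step — establishing positivity of the top eigenvalue with the required ``probability $\to 1$'' scaling using only absolute continuity and bounded support, and reconciling it with the pointwise genericity of the other two conditions. Once all three hold on a high-probability event, the deterministic argument above applies verbatim, completing the proof.
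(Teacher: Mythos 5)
Your proposal is correct and follows essentially the same route as the paper: this theorem is imported from Marvel et al.\ \cite{marvel2011continuous} rather than proved here, and the paper's accompanying discussion sketches exactly your dynamical argument --- diagonalize $X_0$, solve the decoupled scalar Riccati equations to obtain $X(t)=X_0(I-X_0t)^{-1}$, and read off the blow-up at $t^*=1/\lambda_1$ with rank-one collapse onto $w_1w_1^T$. The probabilistic ingredients (genericity of $\lambda_1\neq\lambda_2$ and of the nonvanishing coordinates of $w_1$, plus $\lambda_1>0$ with probability tending to one via Wigner/F\"uredi--Koml\'os-type results) are likewise delegated to random-matrix theory in the cited source, just as in your treatment.
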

Theorem 1 states that the network converges to a structurally balanced state for almost all initial conditions. The differential equation (\ref{dyn}) in fact has an analytical solution, which can be obtained by diagonalizing the initial state matrix and analytically solving a scalar differential equation of the same form. The solution of (\ref{dyn}) is given by 
\begin{equation}
X(t) = X_0(I - X_0 t)^{-1},
\end{equation}
which is valid for $t < 1/\lambda_1$ if $\lambda_1 > 0$. This indicates that system ``blows up'' with a finite escape time determined by the largest eigenvalue $\lambda_1$ of $X_0$; the $x_{ij}$'s associated with each faction converge to either plus or minus infinity.  However, the system matrix normalized by the Frobenius norm, viz. $X/\Vert X \Vert_F$, effectively collapses to a rank one matrix defined by the eigenvector associated with the largest eigenvalue of $X_0$. In the context of the applications that we discuss later, the implication of Theorem \ref{dynbalance} is that the sign pattern of the largest eigenvector gives a \emph{prediction} of the eventual factions that emerge based on the current state of the network.
An example trajectory with $n=50$ and the entries of $X_0$ drawn from a uniform distribution on $[-1,1]$ is shown in Figure 1.
\begin{figure} \label{extraj}
\begin{center} 
\resizebox{\linewidth}{!}{\includegraphics{./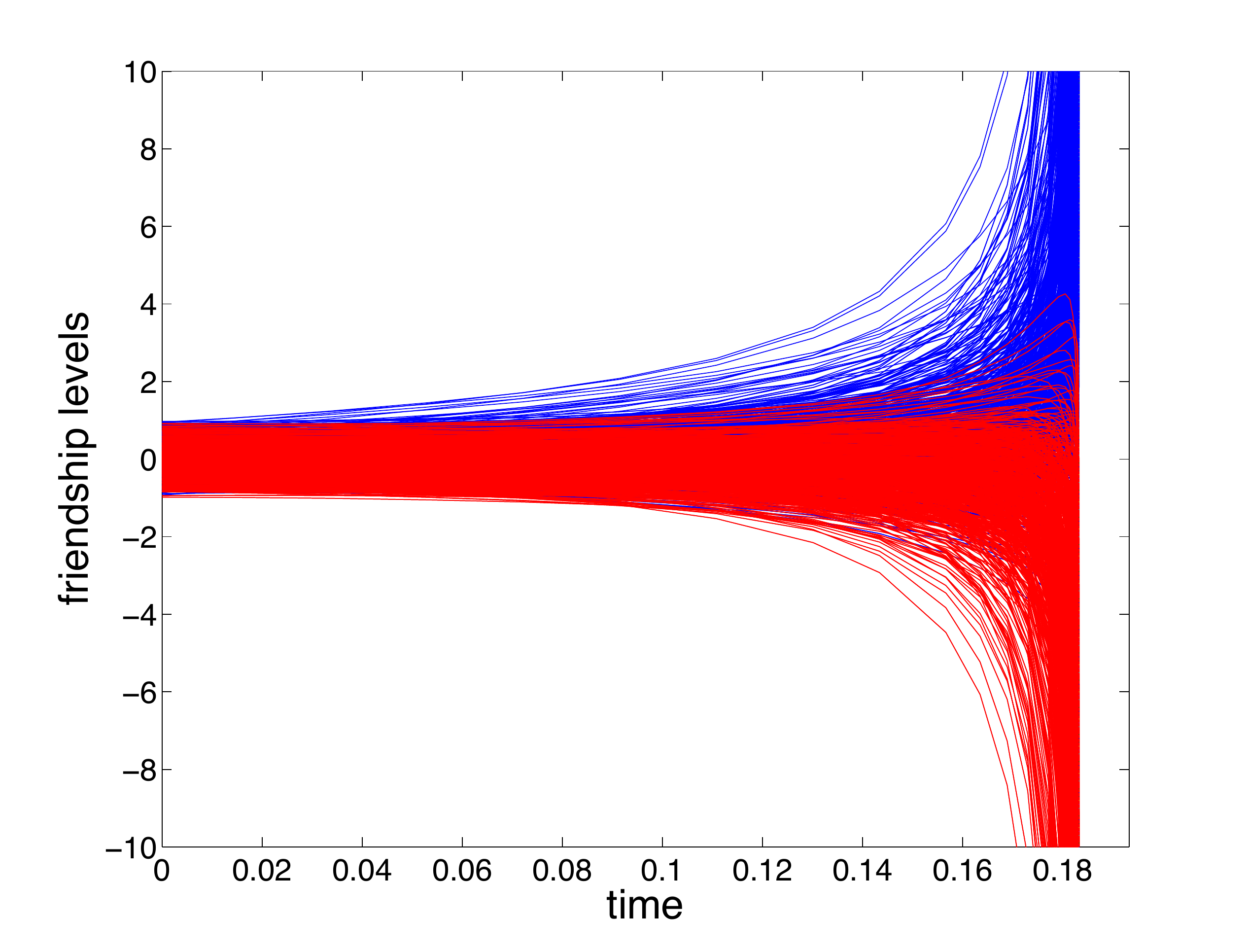}}
\caption{Network friendliness trajectories for a random initial friendliness matrix $X_0$. The friendliness levels ``blow up'' to plus or minus infinity, defining the two opposing factions that emerge.}
\end{center}
\end{figure}


%

\section{Active Influence in Dynamic Models of Structural Balance}
In the last section, we saw that for the dynamic model (\ref{dyn}), the system converges to a structurally balanced state for generic initial conditions, and that this state can be determined from an eigenvector of the initial state matrix.  We now study active external influence influence in this model. We consider a situation in which a single agent can instantaneously perturb its own friendliness levels. We will show that any single agent can produce such a perturbation to achieve any structurally balanced state from any initial condition.

\subsection{Main Result: Single Agent Influence}
Our single agent influence model means that, given an initial state $X_0$, agent $i$ can choose an symmetric perturbation $\Delta X_0$, which has entries equal to zero except for entries in the $i$th row and column, that is added to the initial state to produce a new state $X_0+\Delta X_0$ from which the dynamics flow according to \eqref{dyn}. We will prove the following main result.

 \begin{theorem}\label{main}
Let $X_0 \in \mathbf{R}^{n\times n}$ be any symmetric initial state matrix and $v^*$ be a vector with a desired sign pattern whose entries are either $1$ or $-1$. Then there exists a symmetric perturbation matrix $ \Delta X_0$, computable from $X_0$ and $v^*$, with entries equal to zero except for entries in the $i$th row and column such that the eigenvector associated with the largest eigenvalue of $X_0 + \Delta X_0$ has the same sign pattern as $v^*$.
\end{theorem}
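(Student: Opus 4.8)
The plan is to reduce the problem to a single design equation: choose the perturbation so that the prescribed sign vector $v^*$ becomes the eigenvector of $X_0+\Delta X_0$ associated with a large, strictly dominant eigenvalue. Once $v^*$ spans the top eigenspace, the top eigenvector is a nonzero multiple of $v^*$ and hence carries its sign pattern.

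First I would parametrize the admissible perturbations. A symmetric matrix supported on the $i$th row and column can be written as $\Delta X_0 = e_i d^T + d e_i^T - d_i e_i e_i^T$ for a free vector $d\in\mathbf{R}^n$, where $e_i$ is the $i$th standard basis vector. I would then impose the eigenvector condition $(X_0+\Delta X_0)v^* = \mu v^*$ for a scalar $\mu$ to be chosen, i.e.\ $\Delta X_0 v^* = r$ with $r := (\mu I - X_0)v^*$. A short computation gives $\Delta X_0 v^* = v^*_i\, d + (d^T v^*)\,e_i - d_i v^*_i\, e_i$, whose $i$th entry collapses to $d^T v^*$. Because $v^*_i = \pm 1 \ne 0$, the $n-1$ equations for the rows $j\ne i$ pin down $d_j = r_j/v^*_i$ uniquely, and the single remaining row-$i$ equation $d^T v^* = r_i$ then fixes $d_i$. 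Hence for every $\mu$ there is a unique admissible $\Delta X_0$, explicitly computable from $X_0$ and $v^*$, that makes $v^*$ an eigenvector with eigenvalue exactly $\mu$.

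The crux is to guarantee that $\mu$ can be forced to be the strictly largest eigenvalue by taking it large. The residual $r$ is affine in $\mu$, so $d$ and hence $\Delta X_0(\mu)=\mu\Delta_1+\Delta_0$ are affine in $\mu$; thus $M(\mu):=X_0+\Delta X_0(\mu)=\mu\Delta_1+C$ with $C:=X_0+\Delta_0$ independent of $\mu$. Matching powers of $\mu$ in $M(\mu)v^*=\mu v^*$ shows $\Delta_1 v^* = v^*$ and $Cv^*=0$. I would then analyze the spectrum of the leading matrix $\Delta_1$: it is symmetric, supported on row/column $i$, and has rank $2$ (for $n\ge 2$), so it has exactly two nonzero eigenvalues. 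Their sum equals $\operatorname{tr}\Delta_1=(\Delta_1)_{ii}=2-n$; since $v^*$ is already an eigenvector with eigenvalue $1$, the other nonzero eigenvalue must be $1-n$. Therefore the largest eigenvalue of $\Delta_1$ is $1$, it is simple with eigenvector $v^*$, and it strictly exceeds every other eigenvalue ($0$ or $1-n$).

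Finally I would invoke Weyl's inequality to transfer this spectral gap from $\Delta_1$ to $M(\mu)$. Since $\Vert C\Vert$ is a fixed constant, every eigenvalue of $M(\mu)=\mu\Delta_1+C$ lies within $\Vert C\Vert$ of $\mu$ times the corresponding eigenvalue of $\Delta_1$. The eigenvalue we engineered is exactly $\mu$, while the remaining eigenvalues are at most $\Vert C\Vert$ (those near $0$) or near $\mu(1-n)<0$; hence any choice $\mu>2\Vert C\Vert$ makes $\mu$ the unique largest eigenvalue of $M(\mu)$. Its eigenspace is then one-dimensional and spanned by $v^*$, so the top eigenvector has the sign pattern of $v^*$, which proves the claim. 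I expect the main obstacle to be precisely this spectral-dominance step: the perturbation magnitude grows with the target eigenvalue $\mu$, so $\Delta X_0$ cannot be treated as a small perturbation of $X_0$; the argument instead hinges on isolating the rank-two leading term $\Delta_1$ and verifying that its dominant eigenvalue is the simple value $1$, attained precisely at $v^*$.
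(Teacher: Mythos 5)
Your proposal is correct, and its first half is exactly the paper's construction: the paper also imposes $(X_0+\Delta X_0)v^*=\lambda^* v^*$ and solves the resulting linear system row by row, writing the solution as $\delta x_0^*=V^{-1}(\lambda^* I-X_0)v^*$, which is the same unique $d$ you obtain from the rows $j\neq i$ and then row $i$. Where you genuinely diverge is the spectral-dominance step. The paper proves a lemma that any symmetric matrix supported on row and column $i$ has exactly two nonzero eigenvalues of opposite signs, so $\lambda_2(\Delta X_0)=0$, and then applies Weyl's inequality to the pair $(X_0,\Delta X_0)$ to get $\lambda_2(X_0+\Delta X_0)\le \lambda_1(X_0)+\lambda_2(\Delta X_0)=\lambda_1(X_0)$, so that any $\lambda^*$ at least $\lambda_1(X_0)$ sits at the top of the spectrum. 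You instead isolate the $\mu$-linear part, $M(\mu)=\mu\Delta_1+C$, compute the spectrum of $\Delta_1$ exactly (eigenvalues $1$, $0$ with multiplicity $n-2$, and $1-n$; this is consistent with the paper's lemma formula $\mu=\bigl(\delta x_1\pm\sqrt{\delta x_1^2+4\,\overline{\delta x}^\top\overline{\delta x}}\bigr)/2$ evaluated at $\delta x_1=2-n$, $\overline{\delta x}^\top\overline{\delta x}=n-1$), and apply the Weyl perturbation bound treating $C$ as the perturbation of $\mu\Delta_1$, requiring $\mu>2\Vert C\Vert$. Each route buys something. The paper's threshold $\lambda^*\ge\lambda_1(X_0)$ is sharp and explicit in the data, and this matters downstream: the optimization of the influence magnitude in Section 3.2 is posed precisely over the constraint $\lambda^*\ge\lambda_1(X_0)$, exploiting that the perturbation norm is monotone in $\lambda^*$; your threshold $2\Vert C\Vert$ is cruder and less convenient for that purpose. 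Conversely, your argument enforces \emph{strict} dominance and hence simplicity of the top eigenvalue automatically, whereas the paper's non-strict choice $\lambda^*=\lambda_1(X_0)$ leaves open the boundary case $\lambda_2(X_0+\Delta X_0)=\lambda^*$, in which the top eigenspace could be degenerate and contain vectors without the sign pattern of $v^*$; that small gap in the paper is closed exactly by the kind of strict inequality your construction builds in.
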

Before proving Theorem \ref{main}, we present the following lemma on the eigenvalues of the structured perturbation $\Delta X_0$.
\begin{lemma}\label{lem:DX}
A symmetric $  \Delta X_0$ with all zero entries except the entries of the $i$th row and column has exactly two nonzero eigenvalues with opposite signs.
\end{lemma}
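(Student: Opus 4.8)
The plan is to exploit the very special sparsity pattern of $\Delta X_0$ to show it has rank at most two, and then to pin down the signs of the two surviving eigenvalues using only the trace and the Frobenius norm. Write $a\in\mathbf{R}^n$ for the vector collecting the $i$th row of $\Delta X_0$, i.e. $a_j = (\Delta X_0)_{ij}$, and let $e_i$ denote the $i$th standard basis vector. Since every nonzero entry of $\Delta X_0$ lies in row $i$ or column $i$, one can write it compactly as the outer-product sum
\begin{equation}
\Delta X_0 = e_i a^T + a e_i^T - a_i\, e_i e_i^T .
\end{equation}
The range of this matrix is contained in $\mathrm{span}\{e_i, a\}$, so $\mathrm{rank}(\Delta X_0)\le 2$; consequently at least $n-2$ eigenvalues vanish, and there are at most two nonzero eigenvalues, which I denote $\mu_1,\mu_2$. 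Because $\Delta X_0$ is symmetric, all of its eigenvalues are real.

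Next I would determine $\mu_1,\mu_2$ from two elementary invariants. The trace gives $\mu_1+\mu_2 = \mathrm{tr}(\Delta X_0) = (\Delta X_0)_{ii} = a_i$, while $\mathrm{tr}(\Delta X_0^2) = \|\Delta X_0\|_F^2$ equals the sum of the squares of all entries; counting the diagonal entry once and each off-diagonal row/column entry twice yields $\mu_1^2+\mu_2^2 = a_i^2 + 2\sum_{j\ne i} a_j^2$. Combining these through the identity $2\mu_1\mu_2 = (\mu_1+\mu_2)^2 - (\mu_1^2+\mu_2^2)$ gives
\begin{equation}
\mu_1 \mu_2 = -\sum_{j\ne i} a_j^2 .
\end{equation}
Provided the perturbing agent actually changes at least one off-diagonal relationship (so that $a_j\ne 0$ for some $j\ne i$, the only non-degenerate case of interest), the right-hand side is strictly negative. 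A negative product of two real numbers forces both to be nonzero and of opposite sign, which is exactly the claim.

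I expect the only real content of the argument to be the rank bound in the first step --- recognizing that the one-row/one-column pattern confines the range to the two-dimensional subspace $\mathrm{span}\{e_i,a\}$. Everything after that is a two-line computation with traces. If one prefers a more constructive route, the same conclusion follows by restricting $\Delta X_0$ to its invariant subspace $\mathrm{span}\{e_i,\,a-a_i e_i\}$ and computing the resulting $2\times 2$ block, whose determinant is again $-\sum_{j\ne i}a_j^2<0$; the trace/Frobenius version above simply avoids choosing a basis. The one point worth flagging explicitly in the write-up is the degenerate case in which row $i$ has no nonzero off-diagonal entry, where $\Delta X_0$ collapses to a single rank-one term $a_i e_i e_i^T$ with only one nonzero eigenvalue; the statement is understood to exclude this trivial situation.
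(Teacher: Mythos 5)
Your proof is correct, and it takes a genuinely different route from the paper's. The paper sets $i=1$ without loss of generality, writes $\Delta X_0$ in bordered block form, and computes the characteristic polynomial via the block-determinant formula, obtaining $\mu^{n-2}\bigl(\mu(\mu-\delta x_1)-\overline{\delta x}^\top\overline{\delta x}\bigr)=0$ and hence the explicit roots $\mu=\tfrac{1}{2}\bigl(\delta x_1\pm\sqrt{\delta x_1^2+4\,\overline{\delta x}^\top\overline{\delta x}}\bigr)$. You instead establish $\mathrm{rank}(\Delta X_0)\le 2$ from the outer-product decomposition $e_i a^T + a e_i^T - a_i e_i e_i^T$ and then pin down the sum and product of the two surviving eigenvalues from $\mathrm{tr}(\Delta X_0)$ and $\|\Delta X_0\|_F^2$, concluding opposite signs from the negative product. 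The two arguments encode the same quadratic---your invariants $\mu_1+\mu_2=a_i$ and $\mu_1\mu_2=-\sum_{j\ne i}a_j^2$ are exactly the coefficients of the paper's quadratic factor---but yours is coordinate-free, avoids the determinant computation, and needs no relabeling. What the paper's version buys is the eigenvalues' explicit magnitudes, though nothing downstream uses them: the Weyl-inequality step in the proof of the main theorem only needs $\lambda_2(\Delta X_0)=0$, which your argument delivers equally well. One point in your favor: you explicitly flag the degenerate case $a_j=0$ for all $j\ne i$, where the lemma as stated actually fails (the matrix collapses to $a_i e_i e_i^T$ with at most one nonzero eigenvalue); the paper's proof silently glosses over this, since its root formula then degenerates to the pair $\{\delta x_1, 0\}$.
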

\begin{proof}
Without loss of generality, assume $i=1$. Then $\Delta X_0$ has the form
$$
\Delta X_0=\left [\begin{array}{cc} \delta x_{1} & \overline{\delta x}^\top\\ \overline{\delta x} & 0 \end{array}\right ]
$$
where $\delta x=[\delta x_{1},\delta x_2,\dots,\delta x_n]^\top$ is the first column of $\Delta X_0$, and $\overline{\delta x}=[\delta x_2,\dots,\delta x_n]^\top$. The eigenvalues of $\Delta X_0$ are the solutions of 
\begin{equation}\label{eq:det-DX}
\begin{aligned}
\det( \mu I-\Delta X_0)&=0\\
\det \left [\begin{array}{cc} \mu-\delta x_{1} & -\overline{\delta x}^\top\\ -\overline{\delta x} & -\mu I \end{array}\right ] &=0.
\end{aligned}
\end{equation}
Using the formula for block determinants $\det A=(a_{11} -a_{21}^T A_{22}^{-1}a_{21}) \det A_{22} $ where $A=\left [\begin{array}{cc} a_{11} & a_{21}^T \\ a_{21} & A_{22} \end{array}\right ]$, \eqref{eq:det-DX} becomes
\begin{equation}
\mu^{n-2}(\mu(\mu-\delta x_1)-\overline{\delta x}^\top \overline{\delta x})=0.
\end{equation}
Thus, the eigenvalues are $\mu=\frac{\delta x_1\pm \sqrt{\delta x_1^2+4\overline{\delta x}^\top \overline{\delta x}}}{2}$, or $\mu=0$, which proves the result.
\end{proof}

\textit{Proof of Theorem \ref{main}.}
Without loss of generality let $i=1$. We seek a perturbation $\Delta X_0$ such that
\begin{equation}
(X_0+\Delta X_0) v^* = \lambda^* v^*.
\end{equation}
For a given $X_0$, $v^*$, and $\lambda^*$, this is a system of linear equations in $\Delta X_0$, which has the solution
\begin{equation} \label{pert}
\delta x_0^* = V^{-1}(\lambda^* I - X_0)v^*,
\end{equation}
where
\begin{equation} \label{vinv}
V^{-1}=\frac{1}{v^*_1}\left[\begin{array}{cccc}1 & -\frac{v^*_2}{v^*_1} &\dots  & -\frac{v^*_n}{v^*_1} \\ & 1 &  &  \\ &  & \ddots &  \\ &  &  & 1\end{array}\right].
\end{equation}
It remains to be shown that $\lambda^*$  and $v^*$ correspond to the largest eigenvalue and eigenvector pair of $X_0+\Delta X_0$. This can be done by taking $\lambda^* \geq \lambda_1(X_0)$ and using a basic spectral properties (called Weyl inequalities \cite{weyl1912asymptotische}) of the sum of two symmetric matrices. In particular, we have
 \begin{equation}
\lambda_i(X+\Delta X_0) \leq \lambda_1(X_0)+\lambda_i(\Delta X_0), \quad i=1,\dots,n.
 \end{equation}
Taking $i=2$ and noting that $\lambda_2(\Delta X_0) = 0$ from Lemma 1, we have $\lambda_2(X+\Delta X_0) \leq \lambda_1(X_0)$ and thus $\lambda^* = \lambda_1(X+\Delta X_0) \geq \lambda_1(X_0)$, which proves the result. $\qed$

\subsection{Optimizing the Influence}
Equations \eqref{pert} and \eqref{vinv} in the proof of Theorem \ref{main} give an explicit expression for the perturbation required to yield an eigenvector with a specified sign pattern $v^*$ in terms of the initial state matrix $\Delta X_0$ and the largest eigenvalue of the perturbed matrix $\lambda^*$. In real networks, there is a cost associated perturbing the state of the network, and so we are interested in finding perturbations with small magnitude. In particular, we would like to solve the following optimization problem:
\begin{equation}
\begin{aligned}
& \underset{\lambda^*, v^*}{\text{minimize}} && \Vert \delta x_0 \Vert =  \Vert V^{-1}(\lambda^* I - X_0)v^* \Vert  \\
& \text{subject to} && \lambda^* \geq \lambda_1(X_0),
\end{aligned}
\end{equation}
where the entries of $v^*$ can be freely chosen up to the specified sign pattern.

Unfortunately, this problem is not convex; however, we can obtain an approximate solution as follows. First, note that for fixed $v^*$, the objective is minimized when $\lambda^* = \lambda_1(X_0)$, since the objective is monotone increasing in $\lambda^*$. Next, we can obtain an upper bound for the solution and choose $v^*$ to minimize the upper bound. Let $\alpha_{i} = \frac{v_i^*}{v_1^*}$, $\alpha = [\alpha_2,\dots,\alpha_n]^T$, $L = \lambda^* I-X_0$, $L_1$ be the first column of $L$, and $\overline{L}$ be the principal  submatrix of $L$ obtained by removing the first row and column of $L$. Without loss of generality let $v^\star_1$ be positive. An upper bound for the objective is
\begin{equation} \label{eq:up_bound}
\begin{aligned}
\Vert \delta x_0 \Vert &=  \Vert V^{-1}(\lambda^* I - X_0)v^* \Vert = \left\| L_1 +  \left [\begin{array}{c}- \alpha^T \overline{L}\alpha \\ \overline{L} \alpha \end{array}\right] \right\| \\
				&\leq \Vert L_1 \Vert + \left\| \left [\begin{array}{c}- \alpha^T \overline{L}\alpha \\ \overline{L} \alpha \end{array}\right]  \right\|
\end{aligned}
\end{equation}
The upper bound in \eqref{eq:up_bound} is minimized when $\alpha=0$, and this in turn is obtained asymptotically as $v_1^*$ becomes much larger than $v_i^*$ for $i=2,\dots,n$. Thus, taking $\lambda^* = \lambda_1(X_0)$, $v_1^* = 1$ (without loss of generality), and $v_i^*$ small but with the prescribed sign yields a feasible perturbation with small norm. 



\subsection{A Network Centrality Measure for Signed Graphs}
Network centrality measures are real-valued functions that assign a relative ``importance'' to each vertex within the graph. Examples include degree, betweenness, closeness, and eigenvector centrality, among others. The meaning of centrality or importance and the relevance of various metrics depends highly on the modeling context. For example, PageRank, a variant of eigenvector centrality, turns out to be a much better indicator of importance than vertex degree in the context of networks of web pages, one of the key factors leading to Google's domination of web search.

In the context of signed graphs and structural balance, the magnitude of the input required to achieve a desired structurally balanced state, as described in the previous section defines a class of network centrality measures. In particular, this input magnitude gives an importance or influence value for each agent that measures how easily an agent can perturb the network into the desired structurally balanced state. For a given desired state, this magnitude can be used to rank the influence of all agents in the network by computing the value for each agent and sorting the result; the smallest magnitude corresponds to the most influential agent. The required input magnitude and the ranking depends of course on the desired state. We thus define the Structural Balance Influence Index for signed graphs, parameterized by the desired structurally balanced state, as follows.
\begin{definition}[Structural Balance Influence Index]
Given a signed graph with $n$ vertices and an associated symmetric friendliness matrix $X \in \mathbf{R}^{n \times n}$, let $X_i$ be a permutation of $X$ with row and column $i$ moved to the first row and column. Let $v^* \in \{-1,1\}^n$ be a sign pattern vector that defines a desired structurally balanced state. The Structural Balance Influence Index of agent $i$ given $v^*$ is the norm of the input from agent $i$ required to achieve $v^*$:
\begin{equation}
SBII_{v^*} (i) =  \Vert V^{-1}(\lambda_1(X_i)I - X_i)\hat{v}^* \Vert,
\end{equation}
where $V^{-1}$ is given by \eqref{vinv} and $\hat{v}^{*T} = v^{*T} \bullet [1, \epsilon, ..., \epsilon]^T$, with $\epsilon$ small and positive and $\bullet$ denoting the Hadamard (element-wise) product.
\end{definition}

To the best of our knowledge, this is the first such centrality measure defined in this context. It would be interesting to study the distribution of this measure in random and real-world networks.

\section{Illustrative Example}
In this section, we interpret our results in an international relations network. We use UN General Assembly voting records and gross domestic product (GDP) history of 45 countries from 1946 to 2008 to estimate friendliness and economic importance levels amongst the countries. The UN voting records are obtained from \cite{voeten2009united}. As a metric of friendliness levels, we use an affinity index that has appeared in the international relations literature, which is computed for each year based on how often nations vote together in the general assembly; see  \cite{gartzke2006affinity} for details. Finally, we multiply the affinity index associated with each pair of countries with the corresponding gross domestic products (GDPs) of the two countries in order to capture the relative economic importance of each relationship.

We emphasize that the dynamic model \eqref{dyn} is not intended to be a detailed model for international relations (only UN voting record similarities and GDPs are accounted for); rather the example is used only to illustrate and interpret our results. Given a current estimate of the friendliness levels in the network, Theorem 1 can be used to predict eventual factions that will emerge by calculating the eigenvector associated with the largest eigenvalue of the friendliness matrix.  Figure 2 shows the factions predicted by the model over time. The faction containing the United States is blue, and the opposing faction is red. \

An interesting question in this context is: how could various countries in the network adjust their friendliness bilaterally so that the model predicts global harmony (i.e., all countries converge to a single faction)? Also, which countries require the smallest perturbation to their bilateral relationships to achieve this? These question can be answered in our framework by fixing the desired structurally balanced state $v^*$ to have entries with the same sign and computing the Structural Balance Influence Index for each country.  Figure 3 shows how the Structural Balance Influence Index for each country evolves over time. We see that the United States is the most influential agent in the network according to this model, due to the relative size of its GDP and to its position in the network. From 1946 to 1990, the USSR/Russia required the largest change in its bilateral relationships to achieve global harmony and could be viewed as the economic leader of the faction opposing the United States. In 1990, this role was taken over by China. It is also interesting to note that there has been a global increase in the ability of any country to achieve global harmony over the last eight years of the data, reflecting increasing disagreement in United Nations voting.   

\begin{figure*}
\begin{center} 
\resizebox{1\linewidth}{!}{\includegraphics{./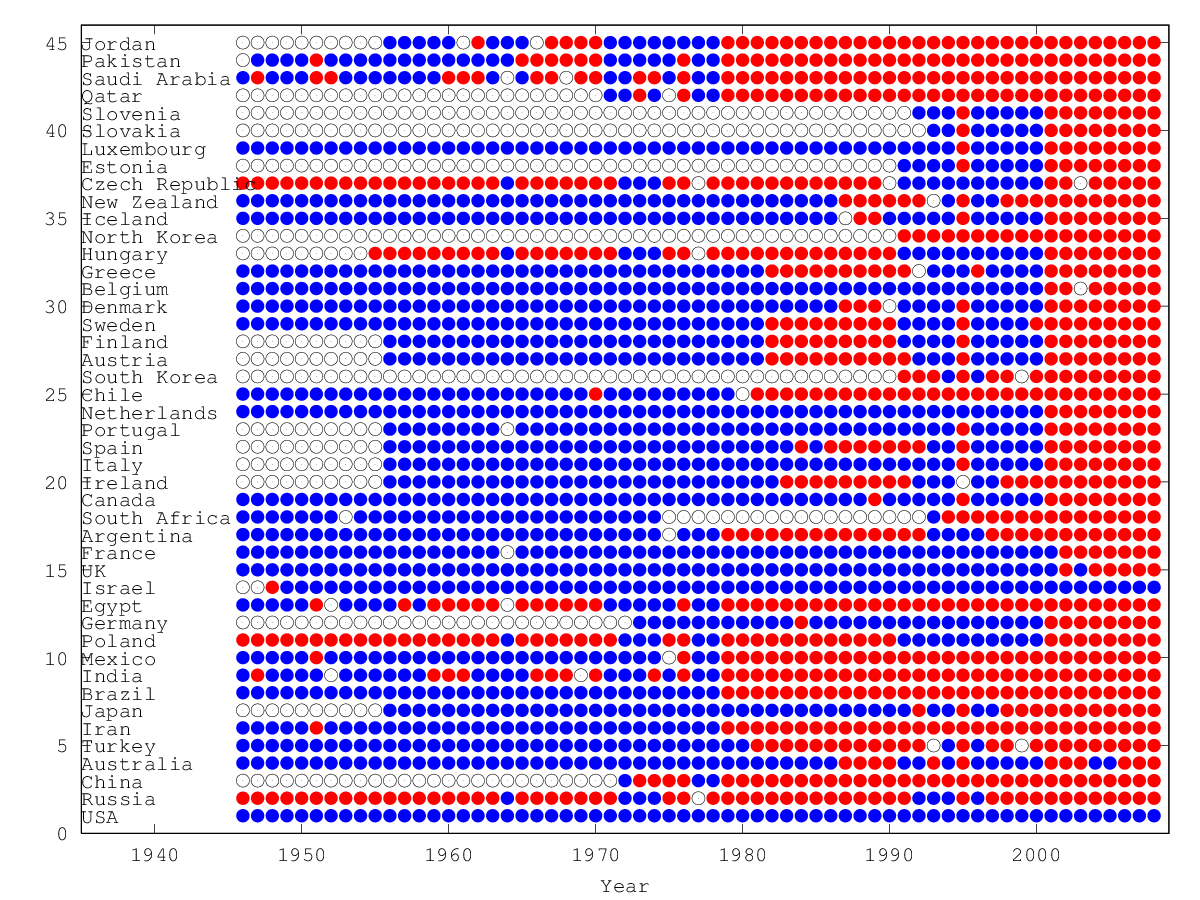}}
\caption{Predicted factions by year.}
\end{center}
\end{figure*}



\begin{figure}
\begin{center} 
\resizebox{1\linewidth}{!}{\includegraphics{./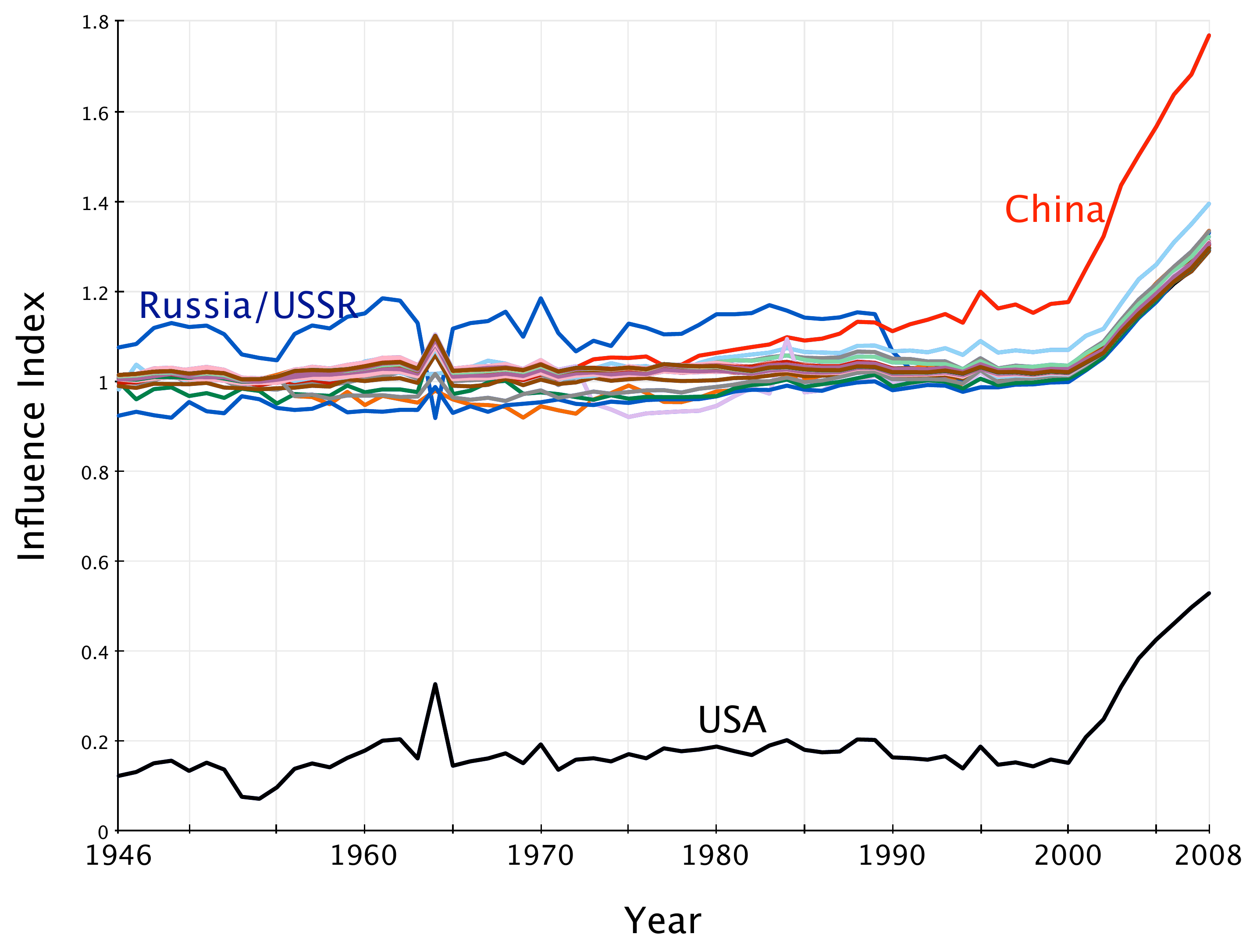}}
\caption{Magnitude of influence required to achieve global harmony (all agents converge to a single faction) over time. The combination of large GDP and position in the network makes the United States the most influential agent using this model. We also see a global increase in the ability to achieve global harmony over the last 8 years.}
\end{center}
\end{figure}

\section{Conclusions}
We have considered active influence in a dynamic model of structural balance in social networks with both positive and negative links. We showed that any single agent can locally perturb the state of the network to achieve any desired structurally balanced state from any initial state. We also showed how to optimize the required influence; the magnitude of the required influence defines a network centrality measure in the context of signed graphs and structural balance. The results were illustrated and interpreted in an international relations network based on United Nations voting data from 1946-2008. The model and our results give an interesting lens through which one can view the voting data.


There are a variety of potential extensions for future research. One could consider extending the results to graphs that are not complete, or to directed graphs in which friendship levels are not necessarily mutual. One could also consider models that allow the emergence of more than two factions; these types of models are studied in \cite{krawczyk2010application}. 
Another interesting direction would be to consider multiple agent having active influence, resulting in a game theoretic formulation. In an international relations context, analysis of a game between two super powers would be particularly interesting. Finally, one could explore interpretations of the results here in signed graphs arising in other contexts, such as data clustering or genetic regulatory networks.
\bibliographystyle{plain}  
\bibliography{refs}  

\end{document}